\newtheorem{prop}{Proposition}
\newcommand{\vsp}{\vspace*{5pt}}
\date{}
\begin{document}

\title{\huge 
Regulation of a continuously monitored quantum harmonic oscillator with inefficient detectors
}
\author{Ralph Sabbagh\orcidlink{https://orcid.org/0000-0002-2020-5420}, 
 Olga Movilla Miangolarra\orcidlink{https://orcid.org/0000-0002-9214-8525}, and Tryphon T. Georgiou\orcidlink{https://orcid.org/0000-0003-0012-5447}, 
\thanks{This work was supported in part by the AFOSR under
grant FA9550-24-1-0278, the ARO under W911NF-22-1-0292, and the NSF under ECCS-2347357. OMM acknowledges support by the European Union's Horizon 2020
programme under the MSCA G.A.~No.\ 101151140.}
\thanks{R. Sabbagh and T. T. Georgiou are with the Department of Mechanical and Aerospace
Engineering, University of California, Irvine, CA 92697 USA 
        {\tt\small rsabbag1@uci.edu, tryphon@uci.edu}}
       \thanks{O. Movilla Miangolarra is with the Department of Physics, Universidad de La Laguna, La Laguna 38203, Spain, and Instituto Universitario de Estudios Avanzados (IUdEA), Universidad de La Laguna, La Laguna 38203, Spain 
        {\tt\small omovilla@ull.edu.es}}}

\maketitle

\begin{abstract}
We study the control problem of regulating the purity of a quantum harmonic oscillator in a Gaussian state via weak measurements. Specifically, we assume time-invariant Hamiltonian dynamics and that control is exerted via the back-action induced from monitoring the oscillator's position and momentum observables; the manipulation of the detector measurement strengths regulates the purity of the target Gaussian quantum state. After briefly drawing connections between Gaussian quantum dynamics and stochastic control, we focus on the effect of inefficient detectors and derive closed-form expressions for the transient and steady-state dynamics of the state covariance. We highlight the degradation of attainable purity that is due to inefficient detectors, as compared to that dictated by the Robertson-Schr\"odinger uncertainty relation. Our results suggest that quantum correlations can enhance the purity at steady-state.  The quantum harmonic oscillator represents a basic system where analytic formulae may provide insights into the role of inefficient measurements in quantum control; the gained insights are pertinent to measurement-based quantum engines and cooling experiments.
\end{abstract}

\begin{keywords}{}
Stochastic systems, quantum control, continuous measurement
\end{keywords}

\section{INTRODUCTION}
The transformative progress of recent years\footnote{The 2012 Nobel prize to Serge Haroche and David Wineland was awarded for ``measuring and manipulation of individual quantum systems.''} in our ability to measure and manipulate quantum states has highlighted the role that control theory can play in the on-going quantum revolution \cite{wiseman2009quantum,jacobs2014quantum,dalessandro}.
Qualitative new features arise in quantum physics when measurements, that are no longer projective, allow monitoring conjugate observables simultaneously, and when a continuous sequence of such weak measurements renders the evolution of the quantum state into that of a stochastic process. Such continuous monitoring opens up the possibility of feedback control \cite{PhysRevA.88.042110}, by dynamically regulating the system Hamiltonian and monitoring parameters.

\begin{figure}[!t]
    \centering
    \includegraphics[width=.85\linewidth]{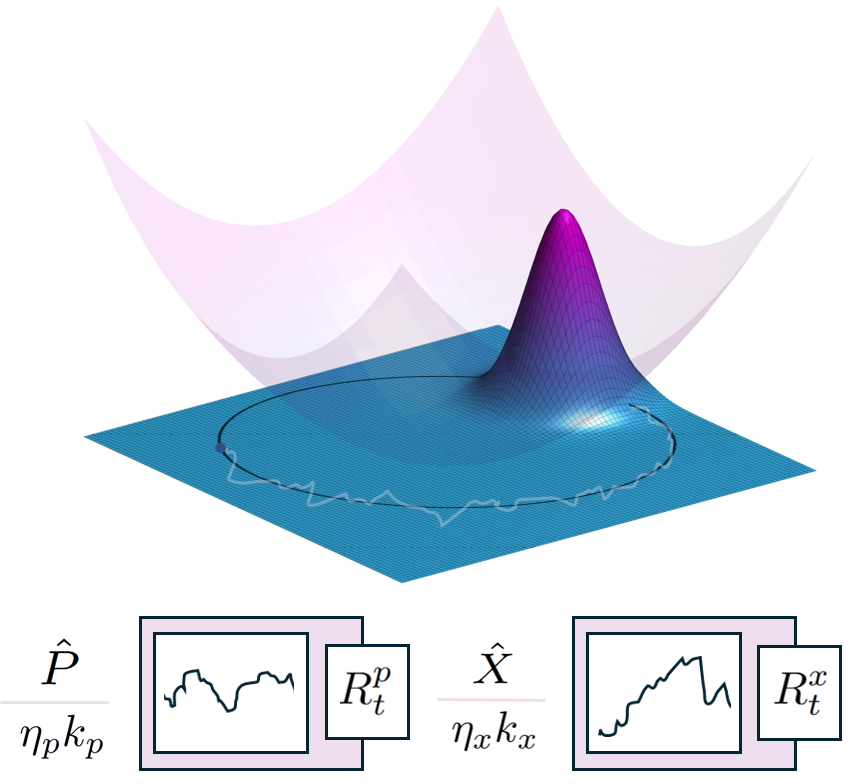}
    \caption{Schematic of the Wigner distribution of a quantum harmonic oscillator Gaussian state subject to the simultaneous continuous monitoring of its position and momentum observables $\hat{X}$ and $\hat{P}$, respectively. The system is coupled to two detectors monitoring the system with strength $k$ and efficiency $\eta$. The Gaussian packet is confined in a harmonic potential and rotates about the trajectory of the unitary dynamics. The roughness of the paths is a direct manifestation of the stochastic back-action caused by continually probing the system via an apparatus that provides measurement readouts $R_t^x$ and $R_t^p$. Such measurement schemes have been successfully implemented on qubits \cite{hacohen2016quantum}.}
    \label{fig:enter-label}
    \vspace{-6mm}
\end{figure}

In this letter, we consider a quantum harmonic oscillator and the control of Gaussian states that are subject to continuous monitoring. Such a system is sufficiently simple that can be dealt with analytically, and still the topic of continued interest \cite{PhysRevLett.123.163601} as it plays
a fundamental role in many areas of quantum physics, including quantum optics and quantum information science.
Moreover, continuous Gaussian measurements are some of the most widely used in quantum labs \cite{landi}.

The present work builds on \cite{karmakar2022stochastic}, where the system's stochastic
and average trajectories for constant measurement strengths and fully efficient detectors were studied; 
the authors in \cite{karmakar2022stochastic} adopt a path-integral approach
using the formalism in \cite{PhysRevA.88.042110}. 
In this letter, we extend some of the analysis in \cite{karmakar2022stochastic} to the case of inefficient measurement detectors.

The letter is structured as follows. In Section \ref{sec:II}, we discuss the quantum harmonic oscillator, introduce the concept of state purity, and present the model of continuous measurement of Gaussian states. Section \ref{sec:III} discusses the dynamics of the state-covariance, while Section \ref{sec:IV} investigates the effect of the detector parameters on the purity. Our results suggest that quantum correlations can enhance the purity at steady-state.

\section{Background and stochastic model}\label{sec:II}
 The quantum harmonic oscillator is introduced next, along with the notions of purity and of Gaussian states.
 Then, a stochastic model describing the conditional evolution of the system under continuous monitoring of its position and momentum is presented.
 Finally, the stochastic dynamics are specialized to Gaussian states, which will be the setting of this letter. The Gaussian assumption reduces the general dynamics to a set of equations describing the evolution of the mean and covariance of the quantum system.  Common states, including coherent states, and measurement noise, are Gaussian which makes this assumption both reasonable and useful \cite{jacobs2006straightforward,genoni2016conditional,PhysRevA.96.062131,WANG20071}. 
 
\subsection{The quantum harmonic oscillator }
The one-dimensional quantum harmonic oscillator \cite[Appendix A.4]{wiseman2009quantum} is described by the energy Hamiltonian
\begin{align*}
    \hat{H}=\cfrac{1}{2m}\hat{P}+\cfrac{1}{2}m\omega^2\hat{X}^2,
\end{align*}
where $m$ and $\omega$ denote the mass and frequency of the oscillator, respectively, and $\hat{X}$ and $\hat{P}$ denote the position and momentum operators on a Hilbert space $\mathbb{H}$, respectively. These operators satisfy the canonical commutation relation $[\hat{X},\hat{P}]=i\hbar \hat{I}$, where $[\cdot\,,\cdot]$ denotes the commutator operation, $\hbar$ the reduced Planck's constant, and $\hat{I}$ the identity operator.

\subsection{Density operators, purity, and Gaussian states}
The state of a quantum harmonic oscillator  can be described by a density operator $\rho$ on a Hilbert space $\mathbb{H}$, that is, a linear operator in $\mathbb{H}$ satisfying
\begin{align*}
    \text{tr}(\rho)=1,~~~\rho^{\dagger}=\rho,~~~\rho\geq 0,
\end{align*}
where $\text{tr}$ and $(\cdot)^{\dagger}$ denote the trace and adjoint operations on $\mathbb{H}$, respectively. An equivalent representation of the quantum state $\rho$ can be obtained as a bivariate function in phase space  $W_{\rho}(\mathbf{s})$, $\mathbf{s}\in\mathbb{R}^2$, known as the \textit{Wigner distribution of $\rho$}. This representation is readily obtained by taking the scaled Fourier transform of the characteristic function of $\rho$,
\begin{align*}
    f_{\rho}(\text{\boldmath$\xi$})=\text{tr}\left(\rho e^{i\xi_1\hat{X}+i\xi_2\hat{P}}\right),~~
\end{align*}
where $\text{\boldmath$\xi$}=(\xi_1,\xi_2)^\top\in\mathbb{R}^2$, that is,
\begin{align*}
    W_{\rho}(\mathbf{s})=\cfrac{1}{(2\pi)^2}\int_{\mathbb{R}^2}f_{\rho}(\text{\boldmath$\xi$})e^{-i\mathbf{s}^\top\text{\boldmath$\xi$}}\text{d}\text{\boldmath$\xi$}.
\end{align*}
An important figure in physical experiments is the \textit{purity} $p$ of $\rho$,
 defined as\footnote{The second equality can be established using the Weyl-Wigner transforms.} 
\begin{align*}
    p:= \text{tr}(\rho^2) = (2\pi\hbar)\int_{\mathbb{R}^2}W^2_{\rho}(\mathbf{s})\text{d}\mathbf{s},
\end{align*}
where $0< p\leq 1$. A state $\rho$ is said to be pure when $p=1$, and mixed otherwise. 

Of great importance are  Gaussian states; we say that $\rho$ is  \textit{Gaussian} if its Wigner distribution is given by a bivariate Gaussian function \cite{WANG20071}
\begin{align*}
    W_{\rho}(\mathbf{s})= \cfrac{1}{2\pi\sqrt{\det(\Sigma)}}e^{-\frac{1}{2}(\mathbf{s}-\text{\boldmath$\mu$})^\top\Sigma^{-1}(\mathbf{s}-\text{\boldmath$\mu$})},
\end{align*}
for some mean $\text{\boldmath$\mu$} = (\mu^{x},~\mu^p)^\top\in\mathbb{R}^2$ and $2\times 2$ positive-definite covariance matrix \cite{PhysRevA.49.1567}
\begin{align*}
    \Sigma = \begin{bmatrix}
        v^x&c\phantom{^x}\\
        c\phantom{^x}&v^p
    \end{bmatrix} > 0.
\end{align*}
The entries of $\text{\boldmath$\mu$}$ and $\Sigma$, which correspond to the expected values, variance, and symmetrized covariance of $\hat{X}$ and $\hat{P}$ at $\rho$, can be computed as follows
\begin{align}
    &\mu^x=\text{tr}(\rho\hat{X}),~~\mu^p=\text{tr}(\rho\hat{P}),\label{eq:means}\\
    &v^x=\text{tr}(\rho\hat{X}^2)- \text{tr}(\rho\hat{X})^2,~~v^p=\text{tr}(\rho\hat{P}^2)- \text{tr}(\rho\hat{P})^2,\label{eq:variances}\\
    &c = \text{tr}(\rho(\hat{X}\hat{P}+\hat{P}\hat{X}))/2-\text{tr}(\rho\hat{X})\text{tr}(\rho\hat{P})\label{eq:covariance}.
\end{align}
Thus, Gaussian states are completely characterized by their mean \text{\boldmath$\mu$} and covariance $\Sigma$. 
\subsection{Stochastic master equation and monitoring of position and momentum}
Continuous quantum measurement is the time-continuum limit of a sequence of weak measurements, whose strengths scale with the time duration over which they are performed \cite{jacobs2006straightforward}. Namely, the quantum state $\rho_t$ channels, in a small time interval $\Delta t$, through the maps 
\begin{align*}
    \rho_t\mapsto \rho_{t+\Delta t/2} &= \cfrac{e^{-\frac{k_x\Delta t}{8}(\hat{X}-r_x)^2}\rho_te^{-\frac{k_x\Delta t}{8}(\hat{X}-r_x)^2}}{\text{tr}\left(e^{-\frac{k_x\Delta t}{8}(\hat{X}-r_x)^2}\rho_te^{-\frac{k_x\Delta t}{8}(\hat{X}-r_x)^2}\right)},\\
    \rho_{t+\Delta t/2}\mapsto \rho_{t+\Delta t} &= \cfrac{e^{-\frac{k_p\Delta t}{8}(\hat{P}-r_p)^2}\rho_te^{-\frac{k_p\Delta t}{8}(\hat{P}-r_p)^2}}{\text{tr}\left(e^{-\frac{k_p\Delta t}{8}(\hat{P}-r_p)^2}\rho_te^{-\frac{k_p\Delta t}{8}(\hat{P}-r_p)^2}\right)},
\end{align*}
where $r_x$ and $r_p$ are the values obtained from weakly measuring $\hat{X}$, then $\hat{P}$, respectively. The parameters $k_x>0$ and $k_p>0$ characterize the strength of the measurements being performed, and will constitute our control parameters in this letter. By repeatedly applying the above quantum channels and taking the limit as $\Delta t\rightarrow d t$, one obtains the conditional evolution equation of the harmonic oscillator state $\rho_t,~t\geq 0$ \cite{jacobs2006straightforward}. Combining this with the Hamiltonian evolution, we get
    \begin{align}
        &\text{d}\rho_t = \cfrac{1}{i\hbar}[\hat{H},\rho_t]\text{d}t -\frac{k_x}{8}[\hat{X},[\hat{X},\rho_t]]\text{d}t-\frac{k_p}{8}[\hat{P},[\hat{P},\rho_t]]\text{d}t \nonumber\\
        &+ \frac{1}{2}\sqrt{\eta_xk_x}(\hat{X}\rho_t + \rho_t \hat{X} - 2\langle \hat{X}\rangle\rho_t)\text{d}W^x_t\nonumber\\
        & + \frac{1}{2}\sqrt{\eta_pk_p}(\hat{P}\rho_t + \rho_t \hat{P} - 2\langle \hat{P}\rangle\rho_t)\text{d}W^p_t,\label{eq:Belavkin}
    \end{align}
where $W_t^x$ and $W_t^p$ are independent standard Brownian noises obtained from the integrated measurement readouts $R^x$ and $R^p$ of position and momentum, respectively (see \eqref{eq:records}). The effect of the terms besides the one containing the Hamiltonian on the evolution of $\rho_t$ is what is known as the \textit{measurement back-action}. The parameters $0< \eta_x,\eta_p\leq 1$ reflect the detector efficiencies, where $1$ corresponds to ideal detection and $0$ corresponds to no detection. Measurements are typically realized by coupling the system to auxiliary
systems (\textit{meters}), so that their interaction effects a measurement back-action \cite{jackson2023perform}. Equation \eqref{eq:Belavkin} is referred to as the \textit{Stochastic Master Equation} (SME), originally derived by V. Belavkin \cite{blaquiere1987information}.

The integrated measurement readouts $R^x$ and $R^p$\footnote{In the continuum limit, one should think of the quantities  $\text{d}R_t^x/\text{d}t\approx r_x$ and $\text{d}R_t^p/\text{d}t \approx r_p$ as being the weakly measured values at each instant $t$.} are modeled by the stochastic differential equations
\begin{subequations}\label{eq:records}
\begin{align}
     \text{d}R_t^x &= \mu_t^x\text{d}t+\frac{1}{\sqrt{\eta_xk_x}}\text{d}W_t^x,\\
    \text{d}R_t^p &= \mu_t^p\text{d}t+\frac{1}{\sqrt{\eta_pk_p}}\text{d}W_t^p.
\end{align}
  \end{subequations} 
Notice how the Brownian noises $W_t^x$ and $W_t^p$ in \eqref{eq:records} are the exact same as the ones appearing in \eqref{eq:Belavkin}. This illustrates the explicit dependence of the state evolution $\rho_t$ on the speciﬁc measurement results $R_t^x$ and $R_t^p$. The parallels between (\ref{eq:Belavkin}-\ref{eq:records}) and nonlinear filtering are inescapable \cite{altafini2012modeling}.

A schematic of the Wigner distribution of a monitored Gaussian state is shown in Fig. \ref{fig:enter-label}. The system is coupled to two detectors, each monitoring the
system with a strength $k$ and efficiency $\eta$. The Gaussian packet is confined in a harmonic potential and rotates about the trajectory of the unitary dynamics (i.e., the ones induced by the Hamiltonian term only). For a detailed derivation of the Master Equation and associated output equations, see \cite{jacobs2006straightforward}.

  If the initial state $\rho_0$ is Gaussian, then the evolution $\rho_t$ under \eqref{eq:Belavkin} remains Gaussian\footnote{Although the operator in \eqref{eq:Belavkin} is nonlinear, Gaussianity is preserved due to the fact that the nonlinear comes from a scaling factor in the noise term.} for all $t>0$ \cite{genoni2016conditional,PhysRevLett.122.190402,PhysRevA.96.062131}. Thus, it is enough to consider the dynamics of  $(\text{\boldmath$\mu$}_t,\Sigma_t)$, which can be derived using (\ref{eq:means}-\ref{eq:Belavkin}) and a careful application of It$\hat{\text{o}}$'s Lemma \cite{jacobs2006straightforward}. 
  These are
   \begin{align}
       \text{d}\text{\boldmath$\mu$}_t &= A^\top\text{\boldmath$\mu$}_t\text{d}t + \Sigma_t B\text{d}W_t,\label{eq:mean}\\
       \dot{\Sigma}_t &= \Sigma_t A+A^{\top}\Sigma_t-\Sigma_t BB^{\top}\Sigma_t+Q,\label{eq:cov}
   \end{align}
   where $Q = \hbar^2\text{diag}(k_p,k_x)/4$ and 
   \begin{align}
       A = \begin{bmatrix}
           0&-m\omega^2\\
           \frac{1}{m}&0
       \end{bmatrix},B=\begin{bmatrix}
           \eta_xk_x&\hspace{-4mm}0\\
           0&\eta_pk_p
       \end{bmatrix}^{\frac{1}{2}}.\label{eq:ABQ}
   \end{align}
  While the dynamics of $\text{\boldmath$\mu$}_t$ undergo noisy rotation in phase-space, those of $\Sigma$ are deterministic. In fact, \eqref{eq:cov} is nothing but a Riccati Differential Equation, whose convergence and analytical solution is discussed in the following section. 
   
   \section{Dynamics of the state-covariance}\label{sec:III}
Herein, 
we derive closed-form expressions for the steady-state covariance $\Sigma_\infty$ (that satisfies the Algebraic Riccati Equation \eqref{eq:ARE}), as well as provide an explicit expression for the transient response $\Sigma_t$ (for $t\in[0,\infty)$ that starts from an initial condition $\Sigma_0>\Sigma_\infty$. This condition is natural in our setting, since continuous measurement typically reduces uncertainty.

While the theory of the Riccati equation is a standard topic in classical textbooks, e.g., \cite{anderson2007optimal}, the specific form and size of matrices allows explicit expressions for $\Sigma_\infty$. Moreover, the expression \eqref{eq:convergencerate} of the transient is mildly original, and highlights in a rather transparent manner the convergence in our case of the solution of the Differential Riccati Equation to the stationary value.

\subsection{Stationary solution}
Since the pair $(A,B)$ is stabilizable (as, in fact, $B$ is already square and non-singular), it is well-known that the corresponding Algebraic Riccati Equation 
\begin{align}
      0=\Sigma A+A^{\top}\Sigma-\Sigma BB^{\top}\Sigma +Q,\label{eq:ARE}
\end{align}
has a unique positive definite solution $\Sigma_\infty$ \cite{bittanti1991riccati}, that the eigenvalues of
\begin{equation}\label{eq:Gamma}
\Gamma := A-BB^{\top}\Sigma_{\infty}
\end{equation}
have negative real parts, and that the Riccati Differential Equation converges to $\Sigma_{\infty}$ as $t\rightarrow \infty$, from any initial condition $\Sigma_0>0$ \cite{callier1995convergence}. Next, using the form and size of our data set \eqref{eq:ABQ}, we explicitly compute 
\begin{align*}
  \Sigma_{\infty}=: \begin{bmatrix}
       v_{\infty}^x&c_{\infty}\\
       c_{\infty}&v_{\infty}^p
   \end{bmatrix}.
\end{align*}

   \begin{prop}\label{prop:prop1}
   For detector efficiencies $\eta_x,\eta_p\in(0,1]$ and measurement strengths $k_x,k_p >0$, 
 \begin{align}
        v_{\infty}^x &= \cfrac{\gamma}{\eta_xk_x}\sqrt{1-\left(\cfrac{\alpha-\beta}{\gamma+\delta}\right)^2},~c_{\infty} = \cfrac{\alpha\gamma+\beta\delta}{(\gamma+\delta)\sqrt{\eta_x\eta_pk_xk_p}},\nonumber\\
        v_{\infty}^p &= \cfrac{\delta}{\eta_pk_p}\sqrt{1-\left(\cfrac{\alpha-\beta}{\gamma+\delta}\right)^2}\label{eq:explicit},
    \end{align}
  where
   \begin{align}
   \alpha &= -m\omega^2\sqrt{\frac{\eta_pk_p}{\eta_xk_x}},~~\beta= \frac{1}{m}\sqrt{\frac{\eta_xk_x}{\eta_pk_p}}, \nonumber\\
      \gamma &= \sqrt{\frac{\hbar^2}{4}k_pk_x\eta_x+
      \beta^2 
      },~\delta = \sqrt{\frac{\hbar^2}{4}k_xk_p\eta_p+
       \alpha^2 
       }.\label{eq:greek}\\\nonumber
   \end{align}
   \end{prop}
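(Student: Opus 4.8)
The plan is to exploit the small size of \eqref{eq:ARE}: rather than chasing the stable invariant subspace of the associated Hamiltonian matrix, I would first collapse the quadratic term by a congruence transformation, then solve the resulting three scalar equations in closed form, and finally undo the transformation. Since $B$ in \eqref{eq:ABQ} is diagonal and invertible, set $\Psi := B\Sigma B$, equivalently $\Sigma = B^{-1}\Psi B^{-1}$, a change of variables that preserves positive definiteness. Substituting into \eqref{eq:ARE} and pre- and post-multiplying by $B$, the term $\Sigma BB^{\top}\Sigma$ becomes $B^{-1}\Psi^2B^{-1}$, so $\Psi$ satisfies the reduced Riccati equation with unit input matrix,
\begin{align*}
0 = \Psi\tilde{A}+\tilde{A}^{\top}\Psi-\Psi^2+\tilde{Q},\qquad \tilde{A}:=B^{-1}AB,\quad \tilde{Q}:=BQB.
\end{align*}
A direct computation from \eqref{eq:ABQ} shows that $\tilde{A}=\begin{bmatrix}0&\alpha\\\beta&0\end{bmatrix}$ is anti-diagonal with $\alpha,\beta$ exactly as in \eqref{eq:greek}, that $\tilde{Q}=\mathrm{diag}\!\left(\tfrac{\hbar^2}{4}k_xk_p\eta_x,\tfrac{\hbar^2}{4}k_xk_p\eta_p\right)$, and hence $\gamma^2=\tilde{Q}_{11}+\beta^2$ and $\delta^2=\tilde{Q}_{22}+\alpha^2$ in the notation of \eqref{eq:greek}.

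Writing $\Psi=\begin{bmatrix}\psi_1&\psi_{12}\\\psi_{12}&\psi_2\end{bmatrix}$, the $(1,1)$ and $(2,2)$ entries of the reduced equation become, after completing the square, $\psi_1^2=\gamma^2-(\psi_{12}-\beta)^2$ and $\psi_2^2=\delta^2-(\psi_{12}-\alpha)^2$; taking the positive roots forced by $\Psi>0$ expresses $\psi_1,\psi_2$ through the single remaining unknown $\psi_{12}$. The $(1,2)$ entry gives $\psi_{12}(\psi_1+\psi_2)=\alpha\psi_1+\beta\psi_2$, i.e.\ $(\psi_{12}-\alpha)\psi_1=(\beta-\psi_{12})\psi_2$. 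Squaring this and substituting the two square-root expressions, the quartic terms cancel and the relation collapses to $\gamma^2(\psi_{12}-\alpha)^2=\delta^2(\psi_{12}-\beta)^2$; because $\psi_1,\psi_2>0$, the factors $(\psi_{12}-\alpha)$ and $(\psi_{12}-\beta)$ must have opposite signs, which selects the branch $\gamma(\psi_{12}-\alpha)=-\delta(\psi_{12}-\beta)$ and yields $\psi_{12}=(\alpha\gamma+\beta\delta)/(\gamma+\delta)$, hence $\psi_{12}-\beta=\gamma(\alpha-\beta)/(\gamma+\delta)$ and $\psi_{12}-\alpha=-\delta(\alpha-\beta)/(\gamma+\delta)$. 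Back-substitution gives $\psi_1=\gamma\sqrt{1-\big(\tfrac{\alpha-\beta}{\gamma+\delta}\big)^2}$ and $\psi_2=\delta\sqrt{1-\big(\tfrac{\alpha-\beta}{\gamma+\delta}\big)^2}$, and undoing the congruence via $v_\infty^x=\psi_1/(\eta_xk_x)$, $v_\infty^p=\psi_2/(\eta_pk_p)$, $c_\infty=\psi_{12}/\sqrt{\eta_x\eta_pk_xk_p}$ reproduces \eqref{eq:explicit}.

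The main obstacle is the bookkeeping around the squaring step: one must justify discarding the spurious branch $\gamma(\psi_{12}-\alpha)=\delta(\psi_{12}-\beta)$ (handled by the sign argument above, which uses $\psi_1,\psi_2>0$), and one must confirm that the matrix so constructed is genuinely positive definite, so that it is the unique stabilizing solution $\Sigma_\infty$ whose existence was invoked before the proposition. For the latter, $\det\Psi=\psi_1\psi_2-\psi_{12}^2$ simplifies — using $\gamma^2-\beta^2=\tilde{Q}_{11}$ and $\delta^2-\alpha^2=\tilde{Q}_{22}$ — to $(\delta\tilde{Q}_{11}+\gamma\tilde{Q}_{22})/(\gamma+\delta)$, which is strictly positive since $\hbar>0$; positivity of the diagonal entries of $\Psi$, equivalently that the radicand $1-\big(\tfrac{\alpha-\beta}{\gamma+\delta}\big)^2$ is positive, then follows automatically from the existence of a real positive-definite solution. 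Everything else reduces to routine algebraic simplification.
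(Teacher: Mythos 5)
Your proposal is correct and follows essentially the same route as the paper: the congruence $\Psi=B\Sigma B$ reducing \eqref{eq:ARE} to a unit-input Riccati equation, and the completed squares $\psi_1^2=\gamma^2-(\psi_{12}-\beta)^2$, $\psi_2^2=\delta^2-(\psi_{12}-\alpha)^2$ are exactly the entrywise content of the paper's matrix identity $(\tilde{\Sigma}_\infty-\tilde{A}^\top)(\tilde{\Sigma}_\infty-\tilde{A}^\top)^\top=\Omega+\tilde{A}^\top\tilde{A}$. The only difference is that you resolve the reduced equation by scalar elimination with an opposite-sign branch argument, whereas the paper parametrizes the left factor as $(\Omega+\tilde{A}^\top\tilde{A})^{1/2}U$ with $U$ a rotation and determines $\sin(\theta)=(\alpha-\beta)/(\gamma+\delta)$ from the symmetry of $\tilde{\Sigma}_\infty$ — your explicit determinant computation $\det\Psi=(\delta\tilde{Q}_{11}+\gamma\tilde{Q}_{22})/(\gamma+\delta)>0$ is a welcome extra check the paper leaves implicit.
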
 
   
   \begin{proof} We first rewrite the Algebraic Riccati Equation \eqref{eq:ARE} in the form
   \begin{align}
           &(\Sigma_\infty B-A^{\top}B^{-\top}) (B^\top\Sigma_\infty -B^{-1}A)\nonumber\\
           &=Q+ A^{\top}(BB^\top)^{-1}A,\label{eq:proof}
   \end{align}
   and then multiply both sides of \eqref{eq:proof} by $B$.
 We define
 $\tilde{\Sigma}_{\infty}:= B\Sigma_{\infty}B,$
       $\tilde{A}:=B^{-1}AB,$ and  $\Omega:=BQB$. It follows that 
   \begin{align}
       (\tilde{\Sigma}_{\infty}-\tilde{A}^{\top}) (\tilde{\Sigma}_{\infty}-\tilde{A}^{\top})^{\top}=\Omega+\tilde{A}^{\top}\tilde{A}.\label{eq:quad}
   \end{align}
   Thus, the left-hand factor in \eqref{eq:quad} must be given by 
   \begin{align}
       \tilde{\Sigma}_{\infty}-\tilde{A}^{\top} = (\Omega+\tilde{A}^{\top}\tilde{A})^{\frac{1}{2}}U,\label{eq:sigtilde}
   \end{align}
   for some orthogonal matrix $U$. In fact, this orthogonal matrix must be a rotation, i.e., of the form 
   \begin{align*}
      U= \begin{bmatrix}
         \phantom{-}\cos(\theta)&\sin(\theta)\\
         -\sin(\theta)&\cos(\theta)
       \end{bmatrix},
   \end{align*}
   since the diagonal elements must have the same sign.
   To see this, note that both $\tilde{\Sigma}_{\infty}$ and $\Omega+\tilde{A}^{\top}\tilde{A}$ are diagonal and that $\tilde{A}^{\top}$ is anti-diagonal, specifically,
   \begin{align*}
\tilde{A}=\begin{bmatrix}
          0&\alpha\\
           \beta&0
       \end{bmatrix},~(\Omega+\tilde{A}^{\top}\tilde{A})^{1/2}=\begin{bmatrix}
\gamma &0\\
0&\delta
       \end{bmatrix},
   \end{align*}
   with $\alpha, \beta,\gamma,\delta$ as in the statement of the proposition.
   In light of \eqref{eq:quad} and the symmetry of $\tilde\Sigma_\infty$, it must hold that  $\alpha -\delta\sin(\theta)=\beta+\gamma\sin(\theta)$. Therefore,
   \begin{align*}
       \sin(\theta) &=\cfrac{\alpha-\beta}{\gamma+\delta}~~\text{ and}~~\cos(\theta)= +\sqrt{1-\sin^2(\theta)},
   \end{align*}
where the positive value of the square root is chosen to ensure that $\tilde{\Sigma}_{\infty}$ is positive. Rearranging for $\Sigma_{\infty}$ in \eqref{eq:sigtilde} yields the steady-state covariance matrix  
   \begin{align}
  \Sigma_{\infty} = B^{-1} (\tilde{A}^{\top}+(\Omega+\tilde{A}^{\top}\tilde{A})^{1/2}U)B^{-1}.\label{eq:prop1}
   \end{align}
Further substituting expressions for the entries of $B$, $\tilde{A}$, $\Omega$, and $U$ yields the result.
    \end{proof}

\subsection{Transient response}
Knowing $\Sigma_{\infty}$, e.g., as obtained in \eqref{eq:explicit}, we derive a closed-form solution for the Riccati Equation \eqref{eq:cov}
that displays in a rather transparent form the transient dynamics, which appears to be mildly original. 
\begin{prop} For $\Sigma_0>\Sigma_\infty>0$, and  $\Gamma$ as in \eqref{eq:Gamma},
the solution to the Riccati Differential Equation \eqref{eq:cov} is given by 
\begin{align}\nonumber 
\Sigma_t=&\Sigma_\infty+e^{\Gamma^\top t}\left(
(\Sigma_0-\Sigma_\infty)^{-1} +\phantom{\int_0^t}\right.\\[-.1in]
&\phantom{xxxxxxxxx}\left. + \int_0^t e^{\Gamma \tau}BB^\top e^{\Gamma^\top \tau}dt
\right)^{-1}
e^{\Gamma t}.\label{eq:convergencerate}
\\\nonumber
\end{align}
\end{prop}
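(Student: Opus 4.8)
The plan is to track the deviation $\Delta_t := \Sigma_t - \Sigma_\infty$ from the stationary value and show that it obeys a Riccati equation with no constant forcing term, which can then be linearized by inversion. First I would subtract the Algebraic Riccati Equation \eqref{eq:ARE} from the Differential Riccati Equation \eqref{eq:cov}; using the identity $\Sigma_t BB^\top\Sigma_t - \Sigma_\infty BB^\top\Sigma_\infty = \Delta_t BB^\top\Sigma_\infty + \Sigma_\infty BB^\top\Delta_t + \Delta_t BB^\top\Delta_t$ together with the definition \eqref{eq:Gamma} of $\Gamma = A - BB^\top\Sigma_\infty$, the constant term $Q$ cancels and one is left with the homogeneous Riccati equation
\begin{align*}
\dot\Delta_t = \Gamma^\top\Delta_t + \Delta_t\Gamma - \Delta_t BB^\top\Delta_t,\qquad \Delta_0 = \Sigma_0-\Sigma_\infty>0.
\end{align*}

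Next, since $\Delta_0\succ0$, I would set $P_t := \Delta_t^{-1}$ on the interval where $\Delta_t$ is invertible and use $\dot P_t = -P_t\dot\Delta_t P_t$ to convert the above into the \emph{linear} Lyapunov-type equation
\begin{align*}
\dot P_t = -\Gamma P_t - P_t\Gamma^\top + BB^\top,\qquad P_0 = (\Sigma_0-\Sigma_\infty)^{-1}.
\end{align*}
Applying the integrating factor $e^{\Gamma t}(\cdot)e^{\Gamma^\top t}$ gives $\tfrac{\mathrm d}{\mathrm dt}\big(e^{\Gamma t}P_t e^{\Gamma^\top t}\big) = e^{\Gamma t}BB^\top e^{\Gamma^\top t}$, hence $P_t = e^{-\Gamma t}\big(P_0 + \int_0^t e^{\Gamma\tau}BB^\top e^{\Gamma^\top\tau}\,\mathrm d\tau\big)e^{-\Gamma^\top t}$; inverting and adding back $\Sigma_\infty$ yields exactly \eqref{eq:convergencerate}.

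To make the argument rigorous I would reverse the logic: define $\Sigma_t$ by the right-hand side of \eqref{eq:convergencerate}, observe that the bracketed matrix $P_0 + \int_0^t e^{\Gamma\tau}BB^\top e^{\Gamma^\top\tau}\mathrm d\tau$ is positive definite for every $t\ge 0$ (sum of the positive definite $P_0$ and a positive semidefinite integral), so the expression is well defined for all $t\ge 0$, equals $\Sigma_0$ at $t=0$, and satisfies $\Sigma_t - \Sigma_\infty\succ0$ throughout; then differentiate directly to check that it solves \eqref{eq:cov}, and invoke uniqueness of solutions of the Riccati ODE. The step I expect to be the main obstacle is precisely this global well-posedness point — a priori the inverse $\Delta_t^{-1}$ used in the derivation exists only locally — but it is dispatched by the observation that the constructed $P_t$ never becomes singular because $P_0\succ0$ and the integral term is monotone nondecreasing and positive semidefinite. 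As a byproduct, since $\Gamma$ is Hurwitz by \eqref{eq:Gamma}, the integral converges to the controllability Gramian of $(\Gamma,B)$ while $e^{\Gamma t}\to 0$, so $\Sigma_t\to\Sigma_\infty$, recovering the convergence asserted earlier in Section \ref{sec:III}.
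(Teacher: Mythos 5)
Your proposal is correct and follows essentially the same route as the paper: subtract the ARE from the DRE to get the homogeneous Riccati equation for $\Sigma_t-\Sigma_\infty$ in terms of $\Gamma$, linearize by matrix inversion, and solve the resulting Lyapunov-type equation with the integrating factor $e^{\Gamma t}(\cdot)e^{\Gamma^\top t}$. The only differences are that you invert before applying the integrating factor (the paper does the reverse), and you add a welcome global well-posedness check that the paper leaves implicit.
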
 

\begin{proof}
We first subtract \eqref{eq:ARE} from  \eqref{eq:cov}, and use the expression for $\Gamma$, to obtain that
\begin{align}\nonumber
    \frac{\text{d}}{\text{d}t}(\Sigma_t-\Sigma_{\infty})&=(\Sigma_t-\Sigma_{\infty})\Gamma + \Gamma^{\top}(\Sigma_t-\Sigma_{\infty})\\
    &-(\Sigma_t-\Sigma_{\infty})BB^{\top}(\Sigma_t-\Sigma_{\infty}).\label{eq:Sigmaminusinfty}
\end{align}
We now define $Y_t=e^{-\Gamma^\top t}(\Sigma_t-\Sigma_{\infty})e^{-\Gamma t}$, and then pre- and post- multiply \eqref{eq:Sigmaminusinfty} by the integrating factors $\exp(-\Gamma^\top t)$ and $\exp(-\Gamma t)$ to obtain the differential equation
\[
\dot Y_t = - Y_t e^{\Gamma t}BB^\top e^{\Gamma^\top t} Y_t
\]
for $Y_t$. Continuing on, we define $Z_t=Y_t^{-1}$ and note that $Z_0=Y_0^{-1}=(\Sigma_0-\Sigma_\infty)^{-1}$. Since, $Z_t$ obeys $\dot Z_t= e^{\Gamma t}BB^\top e^{\Gamma^\top t}$,
\[
Z_t=Z_0+\int_0^t e^{\Gamma \tau}BB^\top e^{\Gamma^\top \tau} d\tau.
\]
Since $\Sigma_t=\Sigma_\infty + e^{\Gamma^\top t}Z_t^{-1}e^{\Gamma t}$, the claim follows.
\end{proof}

    The expression \eqref{eq:convergencerate} shows explicitly the dependence of the thermalization (i.e., convergence to equilibrium) of the quantum state on the detector parameters in $\Gamma$, which, in light of Proposition \ref{prop:prop1} can be expressed as
    \[
    \Gamma =\begin{bmatrix}
        -\eta_xk_xv_\infty^x & -m\omega^2-\eta_xk_x c_\infty^x\\
        \frac{1}{m}-\eta_pk_pc_\infty^p &   -\eta_pk_pv^p_\infty
    \end{bmatrix}.
    \]
    The analysis for the case when $\Sigma_0\not>\Sigma_\infty$ is more nuanced and will be detailed elsewhere.

\section{Steady-state regulation}\label{sec:IV}
We now investigate the effect of the detector parameters $\eta_x,\eta_p,k_x,k_p$ on the steady-state covariance $\Sigma_{\infty}$. 
The dynamics of the determinant of $\Sigma_t$ are derived and its behavior at steady-state is discussed in terms of the Robertson-Schr\"odinger uncertainty relation. The relation provides a fundamental limit on how much one can ``squeeze" a quantum state. 
In the Gaussian setting, the determinant fully characterizes the state \textit{purity}. The saturation of the uncertainty relation bound can only be achieved with ideal detectors.
It is shown that the purity of the system can be enhanced by introducing quantum correlations, and that the interval of achievable values for the purity at steady-state is solely characterized by the efficiencies of the detectors.
%
We conclude via a numerical example.

\subsection{The purity at steady-state}
The Robertson-Schr\"odinger uncertainty relation \cite{sen2014uncertainty,schrodinger1930heisenbergschen,sakurai2020modern}, applied to the observables $\hat{X}$ and $\hat{P}$, states that
\begin{align}
    v_t^xv_t^p-c_t^2\geq \left|\cfrac{1}{2i}\text{tr}\left(\rho_t[\hat{X},\hat{P}]\right) \right|^2 = \cfrac{\hbar^2}{4},\label{eq:RSUR}
\end{align}
for any quantum state $\rho_t$.
Thus, the determinant $\det(\Sigma_t)$ of a quantum state is bounded below by $\hbar^2/4$.

In the Gaussian setting, the dynamics of the determinant $d_t$ of $\Sigma_t$ are given by 
\begin{align}
    \dot{d}_t = \text{tr}(B\Sigma_t B)\left(\cfrac{\hbar^2}{4}\frac{\text{tr}(\chi B\Sigma_tB)}{\text{tr}(B\Sigma_t B)}-d_t\right),\label{eq:ddot}
\end{align}
and that at steady-state we have 
\begin{align}
    d_{\infty} = \cfrac{\hbar^2}{4}\cfrac{\text{tr}(\chi B\Sigma_{\infty}B)}{\text{tr}(B\Sigma_{\infty}B)} = \frac{\hbar^2}{4}\cfrac{k_xv_\infty^x+k_pv_\infty^p}{\eta_xk_xv_\infty^x+\eta_pk_pv_\infty^p},\label{eq:det}
\end{align}
 where 
\begin{align*}
    \chi = \begin{bmatrix}
        1/\eta_x&0\\
        0&1/\eta_p
    \end{bmatrix},~~0<\eta_x,\eta_p\leq 1.
\end{align*}
Thus, the uncertainty relation \eqref{eq:RSUR} saturates if and only if the detectors are ideal, i.e., $d_{\infty}=\hbar^2/4\iff\eta_x=\eta_p=1$.
Equation \eqref{eq:ddot} extends the one derived in \cite{karmakar2022stochastic} to account for inefficient detectors. Moreover, the determinant fully characterizes the \textit{purity} of the state, since,
\begin{align*}
    p &= (2\pi\hbar)\int_{\mathbb{R}^2}W^2_{\rho}(\mathbf{s})\text{d}\mathbf{s}\\
    &=\cfrac{\hbar}{2\pi\det(\Sigma)}\int_{\mathbb{R}^2}
 e^{-(\mathbf{s}-\text{\boldmath$\mu$})^\top\Sigma^{-1}(\mathbf{s}-\text{\boldmath$\mu$})}\text{d}\mathbf{s} = \cfrac{\hbar}{2\sqrt{\det(\Sigma)}}.
\end{align*}
Using (\ref{eq:ddot}-\ref{eq:det}), we have that
\begin{align*}
    \dot{p}_t=-\frac{p_t}{2}\left(\frac{\text{tr}(\chi B\Sigma_t B)}{\text{tr}(B\Sigma_t B)}p_t^2-1\right),
\end{align*}
and 
\begin{align}
    p_{\infty} = \sqrt{\cfrac{\text{tr}( B\Sigma_{\infty}B)}{\text{tr}(\chi B\Sigma_{\infty}B)}}.\label{eq:purity}
\end{align}
Evidently, $p_{\infty}=1$ if and only if $\eta_x=\eta_p=1$.

\subsection{The dependence of the steady-state purity on detector parameters}

We characterize herein the space of achievable values for $p_{\infty}$ as function of the detector parameters.  

\begin{prop}\label{eq:proppurity}Given the detector efficiencies $\eta_x,\eta_p>0$, then the following statements hold:
\begin{itemize}
    \item[i)] If $\eta_x=\eta_p=\eta$, then $ p_{\infty} = \sqrt{\eta}$,
    irrespective of the chosen strength parameters $k_x$ and $k_p$.
    \item[ii)] If  $\eta_x \neq \eta_p$, say $\eta_x > \eta_p$, then 
    \begin{align}\label{eq:interval}
        p_{\infty}\in(\sqrt{\eta_p},\sqrt{\eta_x}).
    \end{align}
    \end{itemize}
    Moreover, in case ii), any value in $(\sqrt{\eta_p},\sqrt{\eta_x})$ can be achieved by a choice of $q:=k_xk_p>0$ and $s=k_x/k_p>0$ satisfying
   \begin{align}\label{eq:relation}
        \left(1-\cfrac{p_\infty^2}{\eta_x}\right)\gamma(q,s) + \left(1-\cfrac{p_\infty^2}{\eta_p}\right)\delta(q,s) = 0,
    \end{align}
    where $\gamma,\delta$ in \eqref{eq:greek} are seen as functions of $q$ and $s$.
    \end{prop}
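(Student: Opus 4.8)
The plan is to collapse the whole statement onto the single closed-form identity
\begin{equation*}
p_\infty^2=\frac{\gamma+\delta}{\gamma/\eta_x+\delta/\eta_p},
\end{equation*}
after which everything becomes elementary. First I would substitute the explicit entries of \eqref{eq:explicit} into the steady-state purity \eqref{eq:purity}. Since $B=\mathrm{diag}(\sqrt{\eta_xk_x},\sqrt{\eta_pk_p})$, one has $\mathrm{tr}(B\Sigma_\infty B)=\eta_xk_xv_\infty^x+\eta_pk_pv_\infty^p$ and $\mathrm{tr}(\chi B\Sigma_\infty B)=k_xv_\infty^x+k_pv_\infty^p$. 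From \eqref{eq:explicit}, $\eta_xk_xv_\infty^x=\gamma\kappa$ and $\eta_pk_pv_\infty^p=\delta\kappa$ with the common factor $\kappa:=\sqrt{1-((\alpha-\beta)/(\gamma+\delta))^2}$, which is strictly positive because $\Sigma_\infty>0$ forces $v_\infty^x>0$. Hence $\kappa$ cancels between numerator and denominator of \eqref{eq:purity}, leaving the displayed identity, with $\gamma,\delta>0$ by \eqref{eq:greek}.

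Part i) is then immediate: with $\eta_x=\eta_p=\eta$ the identity reads $p_\infty^2=(\gamma+\delta)/((\gamma+\delta)/\eta)=\eta$, with no residual dependence on $k_x,k_p$. For part ii), assume $\eta_x>\eta_p$. Since $\gamma,\delta>0$, the strict bounds $\gamma/\eta_x+\delta/\eta_p<(\gamma+\delta)/\eta_p$ and $\gamma/\eta_x+\delta/\eta_p>(\gamma+\delta)/\eta_x$ hold, and dividing $\gamma+\delta$ by these gives $\eta_p<p_\infty^2<\eta_x$, i.e.\ \eqref{eq:interval}.

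For the achievability claim I would first note that \eqref{eq:relation} is nothing but the rearrangement of $p_\infty^2(\gamma/\eta_x+\delta/\eta_p)=\gamma+\delta$, so the pairs $(q,s)$ satisfying \eqref{eq:relation} are exactly those whose steady-state purity equals the prescribed value. Setting $t:=\gamma/\delta>0$, the identity becomes $p_\infty^2=\eta_x\eta_p(t+1)/(\eta_pt+\eta_x)$; this map is continuous and, since its $t$-derivative has the sign of $\eta_x-\eta_p>0$, strictly increasing, with limits $\eta_p$ as $t\to0^+$ and $\eta_x$ as $t\to\infty$, hence a bijection of $(0,\infty)$ onto $(\eta_p,\eta_x)$. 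It remains to check that $t=\gamma/\delta$ realizes every positive value as $(k_x,k_p)$ vary. Writing $k_x=\sqrt{qs}$, $k_p=\sqrt{q/s}$ in \eqref{eq:greek} gives $\gamma^2=\eta_x(\tfrac{\hbar^2}{4}q+\tfrac{s}{m^2\eta_p})$ and $\delta^2=\eta_p(\tfrac{\hbar^2}{4}q+\tfrac{m^2\omega^4}{\eta_xs})$; fixing any $q>0$ and letting $s$ run over $(0,\infty)$, the ratio $\gamma^2/\delta^2$ is continuous and tends to $0$ as $s\to0^+$ and to $\infty$ as $s\to\infty$, so by the intermediate value theorem it takes every positive value. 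Composing with the bijection above produces every $p_\infty\in(\sqrt{\eta_p},\sqrt{\eta_x})$, each via a pair $(q,s)$ that satisfies \eqref{eq:relation} by construction.

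The main thing to watch is the bookkeeping: one must verify that $\kappa>0$ so that the cancellation producing the key identity is legitimate, and express $\alpha,\beta,\gamma,\delta$ correctly as functions of $q$ and $s$; after that the monotonicity computation, the two limits, and the final appeal to the intermediate value theorem are routine.
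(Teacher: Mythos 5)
Your proposal is correct and follows essentially the same route as the paper: both reduce everything to the identity $p_\infty^2=(\gamma+\delta)/(\gamma/\eta_x+\delta/\eta_p)$ obtained by substituting \eqref{eq:explicit} into \eqref{eq:purity} and cancelling the common factor $\kappa$, from which part i), the strict bounds in \eqref{eq:interval}, and relation \eqref{eq:relation} all follow by elementary algebra. The one place you go beyond the paper is the surjectivity step: where the paper merely remarks that the whole range is attained for vanishingly small $q$, your intermediate-value argument on $t=\gamma/\delta$ (with the explicit formulas $\gamma^2=\eta_x(\tfrac{\hbar^2}{4}q+\tfrac{s}{m^2\eta_p})$, $\delta^2=\eta_p(\tfrac{\hbar^2}{4}q+\tfrac{m^2\omega^4}{\eta_x s})$) is fully rigorous and in fact shows the slightly stronger statement that every value in $(\sqrt{\eta_p},\sqrt{\eta_x})$ is attainable for \emph{any} fixed $q>0$ by varying $s$ alone.
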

    \begin{proof} If the detector efficiencies are equal to $\eta$, then $\chi = I/\eta$ and \eqref{eq:purity} gives the result. If the detector efficiencies are not equal, say $\eta_x>\eta_p$, then from our closed form solution in \eqref{eq:prop1} the measurement strengths must be chosen so that 
    \begin{align}
        p_{\infty}^2 = \cfrac{\text{tr}( B\Sigma_{\infty}B)}{\text{tr}(\chi B\Sigma_{\infty}B)} = \cfrac{\eta_xk_xv_\infty^x+\eta_pk_pv_\infty^p}{k_xv_\infty^x+k_pv_\infty^p},\label{eq:proofth2}
    \end{align}
    where $v_{\infty}^x$ is as defined in  \eqref{eq:explicit}. Plugging the expressions for $v_\infty^x$ and $v_\infty^p$ in \eqref{eq:proofth2} yields
    \begin{align*}
        \left(1-\cfrac{p_\infty^2}{\eta_x}\right)\gamma + \left(1-\cfrac{p_\infty^2}{\eta_p}\right)\delta = 0.
    \end{align*}
Finally, the surjectivity of the map $(q,s)\mapsto p_{\infty}\in(\sqrt{\eta_p},\sqrt{\eta_p})$ can be established by re-arranging \eqref{eq:relation} into
   \begin{align*}
      p_{\infty}(q,s)=\sqrt{\cfrac{\gamma(q,s)+\delta(q,s)}{\frac{\gamma(q,s)}{\eta_x}+\frac{\delta(q,s)}{\eta_p}}},
  \end{align*}
   and observing that for vanishingly small $q$, the whole range $(\sqrt{\eta_p},\sqrt{\eta_p})$ can be achieved for $s\in(0,\infty)$ .
   \end{proof}
\begin{figure}
\centering
  \includegraphics[width=.7\linewidth]{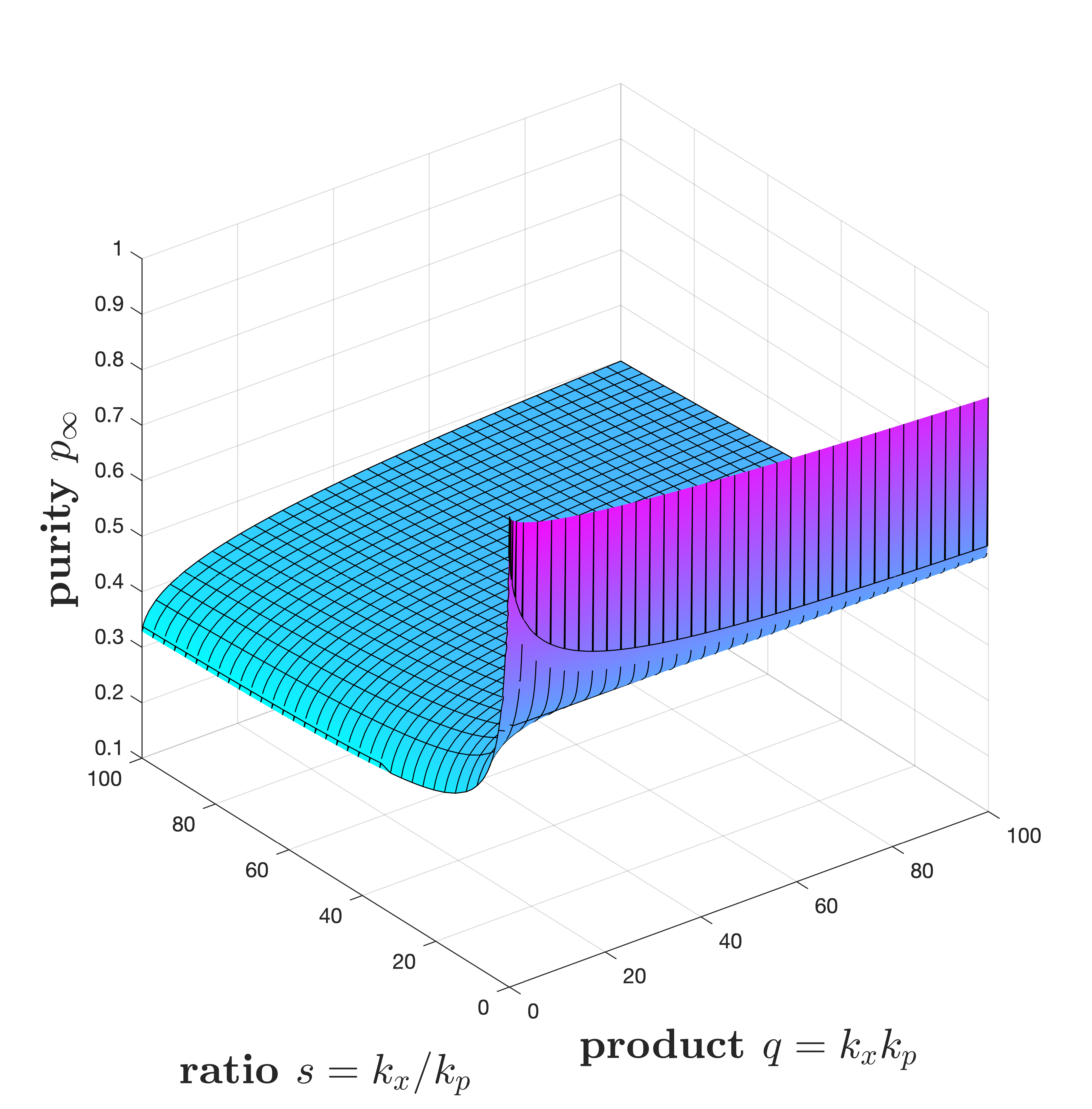}
  \caption{The purity $p$ at steady-state versus the product and ratio $q=k_xk_p$ and $s=k_x/k_p$, respectively. We take $m=\omega =1$, $\hbar = 2$, $\eta_x = 0.1$, and $\eta_p$ = 0.9.}
  \label{fig:first}
  \end{figure}
  \begin{figure}
\centering
  \includegraphics[width=.7\linewidth]{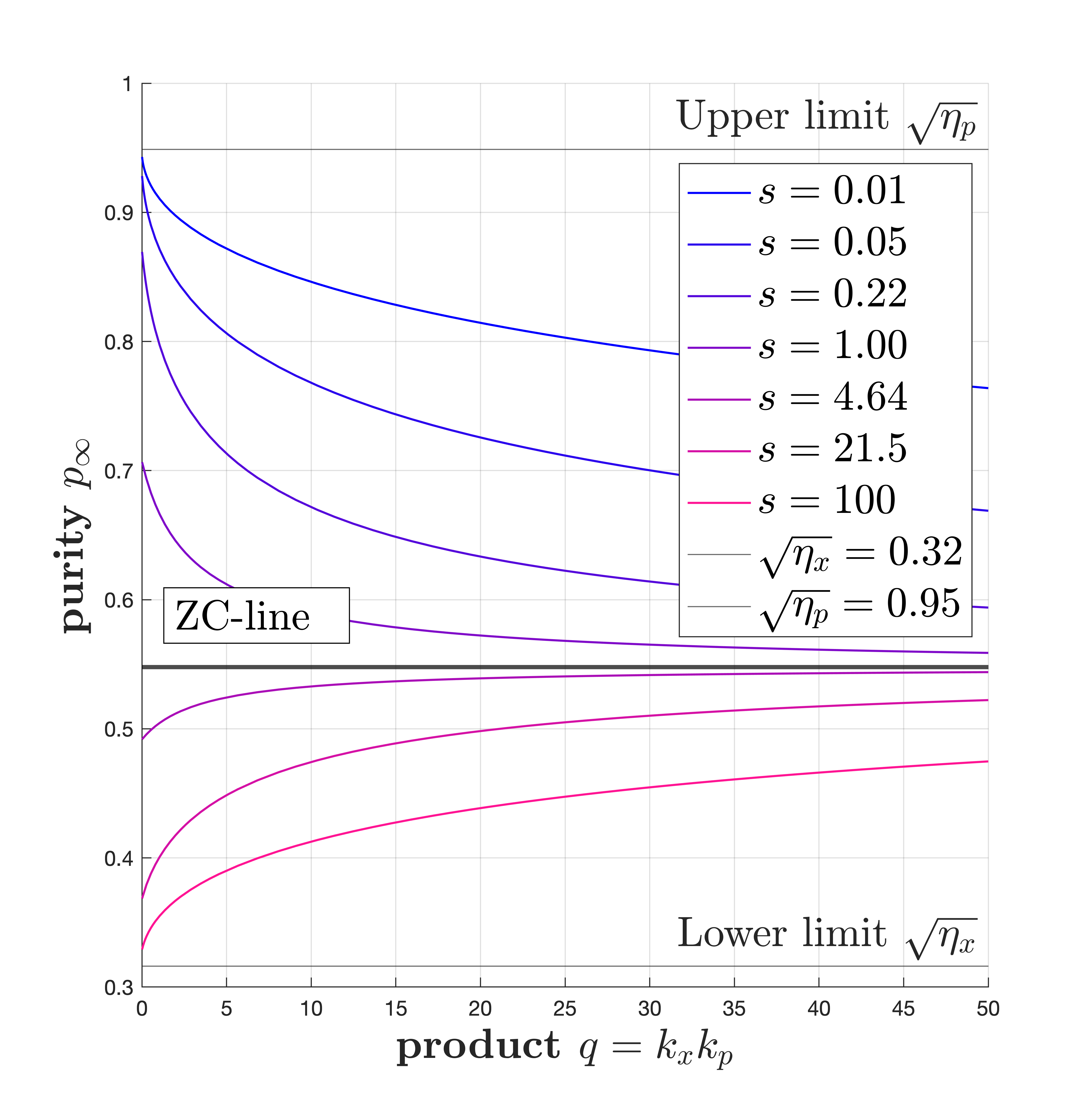}
  \caption{The purity $p$ at steady-state versus the product $q=k_xk_p$ for fixed values of the ratio $s=k_x/k_p$. We take $m=\omega =1$, $\hbar = 2$, $\eta_x = 0.1$, and $\eta_p$ = 0.9.}
  \label{fig:second}
\end{figure}
  \begin{figure}
  \centering
  \includegraphics[width=.7\linewidth]{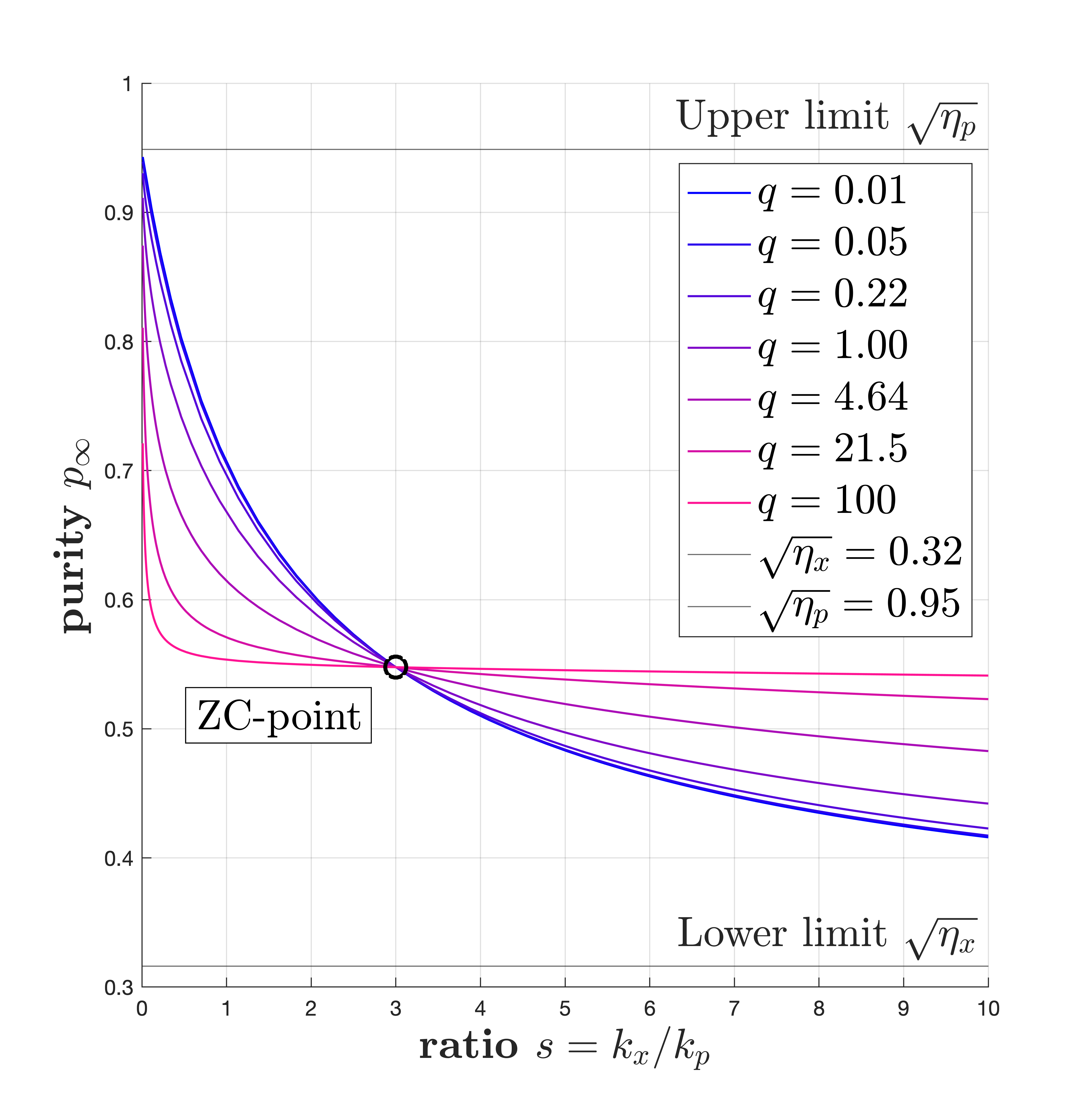}
  \caption{The purity $p$ at steady-state versus the ratio $s=k_x/k_p$ for fixed values of the product $q=k_xk_p$. We take $m=\omega =1$, $\hbar = 2$, $\eta_x = 0.1$, and $\eta_p$ = 0.9. }
  \label{fig:third}
\end{figure}

\subsection{Zero-Correlation point}
 A particular feature of the dependence of $p_\infty$ on the detector parameters is observed when the value of the correlation coefficient ($c_{\infty}/\sqrt{v_\infty^xv_\infty^p}$) is zero. This is achieved for a unique value of the ratio $s$, irrespective of $q$, and the steady-state purity is given by $(\eta_x\eta_p)^{1/4}$. We refer to this ratio, along with its corresponding $p_{\infty}$ as the \textit{Zero-Correlation (ZC) point}. 
\begin{prop}\label{prop0}
    Given the detector efficiencies $\eta_x,\eta_p\in(0,1]$, and measurement strengths $k_x,k_p>0$, a zero steady-state correlation is achieved if and only if 
    \begin{align}
        \cfrac{k_x}{k_p}=m^2\omega^2\sqrt{\cfrac{\eta_p}{\eta_x}}.\label{eq:condittion}
    \end{align}
\end{prop}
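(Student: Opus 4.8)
The plan is to work directly from the closed-form expression for $c_\infty$ furnished by Proposition~\ref{prop:prop1}. First I would recall that
$c_\infty = (\alpha\gamma+\beta\delta)\big/\big((\gamma+\delta)\sqrt{\eta_x\eta_p k_x k_p}\big)$
and note that the denominator is strictly positive, since $\gamma,\delta>0$ and the efficiencies and strengths are positive. Hence $c_\infty=0$ if and only if the numerator vanishes, i.e.\ $\alpha\gamma+\beta\delta=0$. This reduces the problem to an algebraic identity among the Greek quantities of \eqref{eq:greek}.

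Next I would exploit the sign pattern visible in \eqref{eq:greek}: $\alpha<0$, $\beta>0$, and $\gamma,\delta>0$. Thus $\alpha\gamma+\beta\delta=0$ is equivalent to $\beta\delta = -\alpha\gamma = |\alpha|\gamma$, an equality between two strictly positive reals, which may be squared without loss to give $\beta^2\delta^2=\alpha^2\gamma^2$. Substituting $\gamma^2=\tfrac{\hbar^2}{4}k_xk_p\eta_x+\beta^2$ and $\delta^2=\tfrac{\hbar^2}{4}k_xk_p\eta_p+\alpha^2$, the cross term $\alpha^2\beta^2$ cancels on both sides, and since $\tfrac{\hbar^2}{4}k_xk_p>0$ the condition collapses to the clean relation $\beta^2\eta_p=\alpha^2\eta_x$.

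Finally I would insert $\alpha^2=m^2\omega^4\,\eta_p k_p/(\eta_x k_x)$ and $\beta^2=\eta_x k_x/(m^2\eta_p k_p)$ and solve $\beta^2\eta_p=\alpha^2\eta_x$ for the ratio $k_x/k_p$; a one-line rearrangement yields $k_x^2/k_p^2 = m^4\omega^4\,\eta_p/\eta_x$, and taking the positive square root (both $k_x,k_p>0$) gives exactly \eqref{eq:condittion}. Since every link in this chain is a genuine equivalence, the "if and only if" follows. The computation itself is routine; the only point that needs care is the squaring step, where one must check the signs of $\alpha,\beta,\gamma,\delta$ so that no spurious branch is introduced — but this is immediate from \eqref{eq:greek}.
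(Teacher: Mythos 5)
Your proposal is correct and follows essentially the same route as the paper's proof, which simply sets $c_\infty=0$ to obtain $\alpha\gamma+\beta\delta=0$ and then states that "expanding and simplifying" yields the result; you supply the details of that expansion, including the sign check that makes the squaring step a genuine equivalence. No gaps.
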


\vsp
\begin{proof}
    Setting $c_{\infty}=0$ in \eqref{eq:explicit} implies that $\alpha\gamma+\beta\delta = 0$. By expanding and simplifying this identity, the result follows.
\end{proof}
    At the zero-correlation point, $\Sigma_\infty$ is 
    \begin{align*}
       \cfrac{\hbar}{2(\eta_x\eta_p)^{1/4}} \begin{bmatrix}
            (m\omega)^{-1}&0\\
            0&m\omega\end{bmatrix},
    \end{align*}
    and the purity at steady-steady is given by the geometric mean of $\sqrt{\eta_x}$ and $\sqrt{\eta_p}$. Thus, the ZC point is the pair \[
    s= m^2\omega^2\sqrt{\cfrac{\eta_p}{\eta_x}}, \mbox{ and } p_\infty(s)=(\eta_x\eta_p)^{1/4},
    \]
    and the uncertainty relation is saturated at $\hbar/(2(\eta_x\eta_p)^{1/4}).$ 
    
    The above discussion highlights that the ZC steady-state is not optimal in the sense of maximizing purity, since by \ref{eq:proppurity}, the entire interval $(\sqrt{\eta_x},\sqrt{\eta_p})$ can be reached with suitable detector strengths. Thus, the purity $p_{\infty}$ can only be improved beyond $(\eta_x\eta_p)^{1/4}$ by introducing non-zero correlations at steady-state. This is one of the main observations in this work.

  Figures \ref{fig:first}-\ref{fig:third}, illustrate a numerical example showcasing the behavior of the steady-state purity $p_{\infty}$ as a function of $q$ and $s$ for given values of $\eta_x$ and $\eta_p$. It is interesting to note that the zero-correlation threshold splits the behavior of $p_{\infty}$ into two regimes. When $s$ is more than the ZC threshold, the purity is bounded above by this threshold and can be improved for large values of $q$. When $s$ is less than the ZC threshold,  the purity if bounded below by this threshold and can be improved for small values of $q$. It is interesting to observe that to mazimize the purity at steady-state, both $s$ and $q$ must be made small, with the $k_p$ strength significantly larger than the $k_x$ strength.  

  In our final figure, Fig.~\ref{fig:four}, we illustrate steady-state purity regulation of a zero-correlation Gaussian state by slowly varying the detector strengths. Specifically, we vary the control parameters $k_x$ and $k_p$ at a much slower rate than the thermalization time scale of the system (i.e., the rate at which the Riccati equation converges), a standard assumption in experiments \cite{karmakar2022stochastic}. The figure displays iso-probability levels (``confidence regions'') for $21$ instances of the regulation parameters. It is seen that squeezing the ZC Gaussian state introduces correlations.

\section{CONCLUSIONS}

We considered the control problem to regulate the purity of a Gaussian quantum state in a harmonic potential via continuous measurements, where the control parameters are given by the strengths of non-ideal detectors monitoring position and momentum. A future direction of great interest is to consider transients and cyclic operation of quantum system, regulating the Hamiltonian as well as using monitoring protocols that adjust detector parameters, for the purpose of quantifying entropy production \cite{landi}, and eventually, work production and power in quantum engines \cite{park2013heat}, echoing a stochastic thermodynamic framework akin to \cite{olga,olga1}.

\begin{figure}
    \centering
    \includegraphics[width=\linewidth]{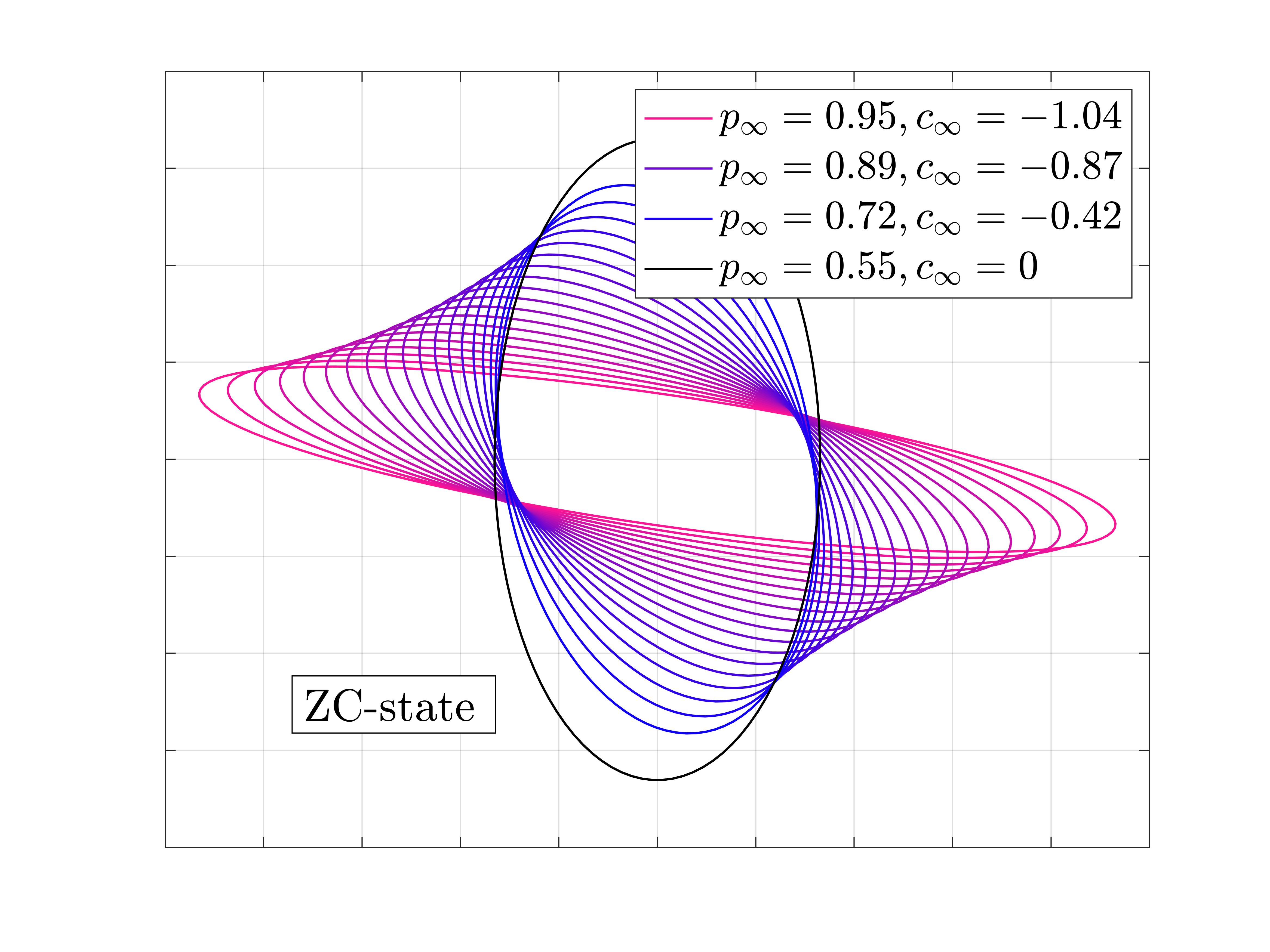}
    \caption{Purification/squeezing of a zero-correlation initial Gaussian state (shown in black) by slowly varying $s$ from $3$ to $0.0001$ with $q$ fixed at $1$. Starting from the zero correlation purity $\sqrt{\sqrt{\eta_x}\sqrt{\eta_p}}\approx 0.55$ where $s = 3$, the maximal value $\sqrt{\eta_p}=\sqrt{0.9}\approx 0.95$ of $p_{\infty}$ is approached as $s$ decrease, while the correlation goes from $0$ to $-1.04$.}
    \label{fig:four}
    \vspace{-3mm}
\end{figure}

\bibliographystyle{IEEEtran}
\bibliography{References.bib}
\end{document}